\newtheorem{lemma}{Lemma}
\newcommand{\etal}{\textit{et al.}}
\newcommand{\Profit}[1]{\pi_{#1}}
\newcommand{\Participants}{p}
\newcommand{\nParticipants}{N_{\Participants}}
\newcommand{\EntryFee}{f}
\newcommand{\Prize}[1]{P_{#1}}
\newcommand{\totalPrizes}{P}
\newcommand{\CommissionFeeRatio}{r}
\newcommand{\Title}{Web3 Meets Behavioral Economics: An Example of Profitable Crypto Lottery Mechanism Design}
\def\BibTeX{{\rm B\kern-.05em{\sc i\kern-.025em b}\kern-.08em
    T\kern-.1667em\lower.7ex\hbox{E}\kern-.125emX}}
\begin{document}

\title{
    \Title\\
    \thanks{This work was partially supported by the Grant KAKENHI (No.18K18162) from MEXT/JSPS, Japan.}
}

\author{
    \IEEEauthorblockN{Kentaroh Toyoda}
    \IEEEauthorblockA{
        \textit{Institute of High Performance Computing (IHPC), Agency for Science, Technology and Research (A*STAR), Singapore} \\
        \textit{Keio University, Faculty of Science and Technology, Yokohama, Japan}\\
        0000-0002-6233-3121
    }
}

\maketitle

\begin{abstract}
We are often faced with the non-trivial task of designing incentive mechanisms in the era of Web3. As history has shown, many Web3 services failed mostly due to the lack of a rigorous incentive mechanism design based on token economics.
However, traditional mechanism design, where there is an assumption that the users of services strategically make decisions so that their expected profits are maximized, often does not capture their real behavior well as it ignores humans' psychological bias in making decisions under uncertainty.
In this paper, we propose an incentive mechanism design for crypto-enabled services using behavioral economics. Specifically, we take an example of a crypto lottery game in this work and incorporate a seminal work of cumulative prospect theory into its lottery game mechanism (or rule) design. We designed four mechanisms and compared them in terms of utility, a metric of how appealing a mechanism is to participants, and a game operator's expected profit. Our approach is generic and will be applicable to a wide range of crypto-based services where a decision has to be made under uncertainty.
\end{abstract}

\begin{IEEEkeywords}
Token economics, lottery game mechanism design, behavioral economics, cumulative prospect theory
\end{IEEEkeywords}

\section{Introduction}
\label{sec:introduction}
We are often faced with the non-trivial task of designing incentive mechanisms in the era of Web3. The very first blockchain, Bitcoin, also incorporates an incentive mechanism into its core consensus algorithm dubbed proof-of-work, where miners compete with each other in its block generation process in return for cryptocurrency. A simple yet elegant incentive mechanism was designed by Nakamoto behind the success of Bitcoin~\cite{Nakamoto2008-nw}. Now that tokens are easily issued via smart contracts (e.g. via ERC-20 (Ethereum Request for Comments 20)), a large number of new crypto-based services emerged. However, as history has shown, many cryptocurrencies failed mostly due to the lack of a rigorous incentive mechanism design based on token economics~\cite{Toepffer2020-kv}.

Mechanism design, a field of microeconomics and game theory, helps us derive an optimal mechanism where desired objectives are achieved by incentives. Desired objectives here mean that a service, an application, or a system works as intended by its operators.
In the mechanism design, we start with modeling entities involved in the system such as operators and users. We then devise entities' profit or utility, which is often derived by subtracting costs from rewards. In the example of Bitcoin's mining, incentives are the rewards of newly minted Bitcoins and transaction fees involved in the block, and costs are electric costs consumed by mining. The traditional mechanism design assumes that entities strategically make decisions so that their expected profits are (mathematically) maximized. However, this often does not explain their real behavior well as it ignores humans' psychological bias in making decisions under uncertainty. For instance, gamblers play a lottery even though they know that its expectation is negative. Hence, we need an alternative approach to better capture participants' behavior in designing a mechanism.

In this paper, we propose an incentive mechanism design for crypto-enabled services using behavioral economics. Behavioral economics combines elements of economics and psychology to understand how and why people behave the way they do in the real world~\cite{Witynski_undated-xb}. Cumulative prospect theory (CPT) is a seminal work of behavioral economics that captures humans' bias in making decisions under risk~\cite{Tversky1992-vu}. The key idea of applying CPT is to ``transform'' traditional expectation functions (i.e. utility and probability) so that humans' bias is better explained, and we will revisit this with a detailed explanation in Section~\ref{sec:proposed_method}. We take an example of a crypto lottery game in this work and leverage CPT to design a profitable lottery game mechanism for a game operator. We design four mechanisms and compare them in terms of utility, a metric of how appealing a mechanism is for participants, and a game operator's expected profit. We rigorously test possible parameters and functions of CPT to identify the conditions where the game is appealing to participants and profitable for the game operator.

The contributions of our work are as follows.
\begin{itemize}
    \item To the best of our knowledge, we are the first to apply behavioral economics to mechanism design in the crypto-based service domain.
    \item Our approach is generic and will be applicable to a wide range of crypto-based services where a decision has to be made under uncertainty.
\end{itemize}

The rest of this paper is organized as follows:
Section~\ref{sec:problem_statement} describes the problem statement.
Section~\ref{sec:model} presents a model used in this research.
Section~\ref{sec:proposed_method} describes the proposed method.
Section~\ref{sec:performance_evaluation} shows numerical results and discusses novelties and open questions.
Section~\ref{sec:conclusions} concludes this paper.

\section{Problem Statement}
\label{sec:problem_statement}
There are few attempts to apply behavioral economics in the crypto and Web3 domains. As far as we know, the only work was done by Thoma, which investigated the risk and return properties of cryptocurrency trading and found that prospect theory well explains the attractiveness of cryptocurrency trading~\cite{Thoma2020-wl}. Hence, although it would be beneficial to incorporate token economics in designing profitable crypto-based services, none of the work has taken into account human behavioral perspectives to their incentive mechanisms.\footnote{Of course, the classical approach based on expected utility maximization would work fine when such a human bias is not involved.} We take a first step towards introducing a powerful ``toolset'' of behavioral economics, CPT, to design incentive mechanisms for crypto-based services. Specifically, we chose a crypto-based lottery game as its example as some of us are going to launch one.\footnote{There exists plenty of work that studies gambles from an economic perspective (e.g.~\cite{Tang2010-zw, Barberis2012-dr, Kesten2017-bh, Baker2020-nd}). It is challenging to cite them here properly. Yet, one of the closest works is done by Tang \etal\ that analyzed the pricing and design of three lotteries, namely a single prize, multiple equal prizes, and multiple weighted prizes, with CPT~\cite{Tang2010-zw}. The optimal pricing of each lottery was derived by solving the first- and second-order conditions of lottery buyers' value function, maximizing their utilities.} The operators of crypto-based lottery games are interested in how to maximize their profits, and it is quite important to predict the expected profits of their games. In this regard, it is vital to understand the behavior of participants. A naive approach is to design game rules and analyze the expected profits of operators and participants. Specifically, when a game rule is given, potential participants need to determine whether or not they should play it based on the expectation of returns. If the expectation of returns is positive, participants are expected to join the game. However, as it will be shown later, this naive approach would not explain their actual behavior well as their action is often biased by psychological factors~\cite{Kahneman1979-pc}. The objectives of our research are to suggest designing profitable crypto-based lotteries with behavioral economics and to show how to quantify its goodness with numerical analysis. 

\section{Model}
\label{sec:model}
\begin{table}[t]
    \centering
    \caption{Notation table.}
    \label{tab:notation_table}
    \begin{tabular}{p{.1\columnwidth}p{.8\columnwidth}}
    \toprule
    Symbol & Description \\
    \midrule
    $\nParticipants$ & Number of participants \\
    $\EntryFee$ & Entry fee \\
    $\CommissionFeeRatio$ & Ratio of the amount that an operator takes from collected entry fees \\
    $\totalPrizes$ & Total amount of prizes given to winners (i.e. $\nParticipants \cdot \EntryFee \cdot (1 - \CommissionFeeRatio$)) \\
    $\Prize{j}$ & Prize given to the $j$-th participant \\
    $\Profit{j}$ & Profit of participant who ranked in $j$-th (i.e. $\Prize{j} - \EntryFee$) \\
    $U(\cdot)$ & Utility function in the EUT \\
    $v(\cdot)$ & Value function in the CPT \\
    $w(\cdot)$ & Probability weighting function in the CPT \\
    $k$ & Ratio of participants that can receive prizes in the top-$k$ mechanisms \\
    \bottomrule
    \end{tabular}
\end{table}
We first model the entities and game design that we analyze in this work. \tablename~\ref{tab:notation_table} lists the notation used in this paper.

\subsection{Entities}
There are two entities, namely a game operator and participants. An operator first determines a lottery game rule. Given a rule, participants determine whether or not to join a game.

\subsection{Lottery Games}
We define our generic lottery game as follows. A participant needs to pay an entry fee $\EntryFee$ to join a game. We assume that participants are ordered by their ranks at the end of a game. The $j$-th ranker will be given a prize $\Prize{j}$, and $\Prize{1} \geq \Prize{2} \geq \dots \geq \Prize{\nParticipants} \geq 0$ where $\sum_{j = 1}^{\nParticipants}\Prize{j} = \totalPrizes$. The source of prizes is collected entry fees. An operator of the game takes $\CommissionFeeRatio$ of the total amount of entry fees, say $\CommissionFeeRatio = 10\%$, for their revenue. 

One such lottery game example is a top growth rate game where some top participants can gain prizes based on their portfolios' growth rate at the end of a game period.\footnote{We will test and deploy this particular game as a service.} Participants are to invest in cryptocurrencies during a game period and compete with each other in their return on investment.
We assume that no one can predict the future price of cryptocurrencies and that the rankings of participants' growth rates are determined by random sampling from a uniform distribution. As we only need to determine the ranking of participants' growth rates, when we let $\theta \in [0, 1]$ denote the type of a participant (or relative competitiveness among participants), the smaller $\theta$, the higher chance of being ranked in a higher position (and vice versa). As $\theta$ is assumed to follow a uniform distribution, its cumulative density function is simply denoted as $F(\theta) = \theta$. Note that $\theta$ is a priori unknown; it is revealed when rankings are determined, meaning that no participant knows his/her own $\theta$ at the time of joining a game. Hence, what participants can strategically determine is whether or not to join a game before it starts giving a game rule or mechanism, and it thus can be seen as a lottery game.

\subsection{Profits}
The profit of an operator and participants can be derived as follows. The operator's profit (revenue) can be obtained by multiplying the number of participants $\nParticipants$ by their entry fees $\EntryFee$ and how much he/she takes $\CommissionFeeRatio$. 
\begin{equation}
    \nParticipants \cdot \EntryFee \cdot \CommissionFeeRatio.
\end{equation}
$\EntryFee$ and $\CommissionFeeRatio$ can be determined by an operator while $\nParticipants$ cannot. Hence, an operator has to determine the mechanism that more participants will join (i.e. larger $\nParticipants$) when $\EntryFee$ and $\CommissionFeeRatio$ are fixed. 

On the other hand, the profit of a participant who ended up $j$-th position is determined by their rankings of growth profit.\footnote{As gamblers enjoy the game itself, a utility function may need to include positive factors of ``excitement'' and ``entertainment''~\cite{Stetzka2021-nx}. However, we simply assume that the positive factor of profit only comes from a game's prize.}
\begin{equation}
    \Profit{j} = \Prize{j} - \EntryFee. \label{eq:profit_participants}
\end{equation}
For an operator to predict if participants will join a game, a naive approach would be to calculate the expectation of their profit and check if it is positive. More specifically, expected utility theory (EUT) is often used to analyze the choice and behavior of players. In the EUT, they make decisions so that their expected utilities are maximized. A utility here means a value of satisfaction given an outcome. For instance, when a participant receives $\Prize{j}$, his/her utility is represented as a conversion of $\Prize{j}$ to $U(\Prize{j})$. A utility function $U(\cdot)$ is increasing, but its curvature is determined by a player's risk preference (e.g. curvature is concave when a player is risk-averse while it is linear ($U(\Profit{j}) = \Profit{j}$) when a player is risk-neutral). Hence, an expected utility is represented as follows.
\begin{equation}
    \sum_{j = 1}^{\nParticipants} p_j U(\Profit{\Participants}) = \frac{1}{\nParticipants}\sum_{j = 1}^{\nParticipants} U(\Profit{\Participants})
    \label{eq:EUT}
\end{equation}
Note that when a participant is assumed to be risk-neutral, the expected utility coincides with the expectation.

However, it is obviously unrealistic to assume that participants follow EUT as this implies that no one will join a game as proven below. 
\begin{lemma}
Individual Rationality: When (risk-neutral) participants are assumed to strategize their participation under the expected utility assumption, no participant would join the defined game regardless of its mechanism.
\end{lemma}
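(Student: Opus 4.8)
The plan is to show that under risk-neutrality the expected utility of joining is a strictly negative constant, so that the outside option of abstaining — whose utility we normalize to $0$ — is always strictly preferred, independently of how the operator chooses the prize vector. First I would specialize \eqref{eq:EUT} to the risk-neutral case, where $U$ is the identity, so that the expected utility of a prospective participant equals the plain expected profit $\frac{1}{\nParticipants}\sum_{j=1}^{\nParticipants}\Profit{j}$; here the uniform factor $1/\nParticipants$ is exactly the probability of landing in rank $j$, which is justified by the uniform prior on the type $\theta$ and the fact that $\theta$ is unknown at the time of the join/abstain decision.

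Next I would substitute $\Profit{j} = \Prize{j} - \EntryFee$ from \eqref{eq:profit_participants}, split the sum into $\frac{1}{\nParticipants}\sum_{j}\Prize{j}$ and $\frac{1}{\nParticipants}\sum_{j}\EntryFee$, and apply the budget identity $\sum_{j=1}^{\nParticipants}\Prize{j} = \totalPrizes$ together with $\totalPrizes = \nParticipants\cdot\EntryFee\cdot(1-\CommissionFeeRatio)$. This collapses the expected utility to $\EntryFee(1-\CommissionFeeRatio) - \EntryFee = -\EntryFee\CommissionFeeRatio$. Since $\EntryFee > 0$ and $\CommissionFeeRatio > 0$, this value is strictly negative; crucially, the computation never uses anything about the individual prizes $\Prize{j}$ beyond their sum, so the conclusion holds for every admissible mechanism (winner-take-all, several equal prizes, weighted prizes, and so on), which is the ``regardless of its mechanism'' part of the claim.

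I do not expect a genuine obstacle: the argument is essentially a one-line expectation computation once the modeling conventions are pinned down. The only points that deserve explicit statement rather than calculation are (i) that the relevant comparison is between $-\EntryFee\CommissionFeeRatio$ and the reservation utility $0$, and (ii) that the symmetry over ranks built into \eqref{eq:EUT} is what makes the per-mechanism details irrelevant. It would also be worth adding a one-sentence remark that the boundary case $\CommissionFeeRatio = 0$ yields expected utility exactly $0$ (a fair game, with participants merely indifferent), which both sharpens the statement and motivates why a behavioral treatment is needed to explain observed participation when $\CommissionFeeRatio > 0$.
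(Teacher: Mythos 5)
Your proposal is correct and follows essentially the same route as the paper: specialize to the risk-neutral identity utility, use the uniform rank probability $1/\nParticipants$, substitute $\Profit{j} = \Prize{j} - \EntryFee$, and apply $\sum_{j=1}^{\nParticipants}\Prize{j} = \nParticipants \cdot \EntryFee \cdot (1-\CommissionFeeRatio)$ to obtain $-\CommissionFeeRatio\cdot\EntryFee < 0$, independently of the individual prizes. Your added remarks (explicit comparison with a reservation utility of $0$ and the boundary case $\CommissionFeeRatio = 0$) are minor refinements beyond the paper's argument but do not change the approach.
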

\begin{proof}
We first derive a condition that each participant joins the game. For a rational participant joins the game, $E[\Profit{\Participants}]$ must be larger than or equal to zero. However, $\theta$ is revealed only after a participant decides to join a game. Hence, regardless of the value of $\theta$, we simply find the expected utility of participants from Eq.~\ref{eq:profit_participants}. As the probability for a participant to be given each prize is $1 / \nParticipants$ for all $\Prize{j}$, the expectation is derived as follows.
\begin{align}
    E[\Profit{\Participants}] &= \frac{1}{\nParticipants} \sum_{j = 1}^{\nParticipants} \Profit{\Participants}, \nonumber \\
    &= \frac{1}{\nParticipants} \sum_{j = 1}^{\nParticipants} \Prize{j} - \EntryFee, \nonumber \\
    &= \frac{1}{\nParticipants} \nParticipants \cdot \EntryFee \cdot (1 - \CommissionFeeRatio) - \EntryFee, \nonumber \\
    &= - \CommissionFeeRatio \cdot \EntryFee. \label{eq:expected_profit_participants}
\end{align}
We used $\sum_{j = 1}^{\nParticipants} \Prize{j} = \nParticipants \cdot \EntryFee \cdot (1 - \CommissionFeeRatio)$. As $- \CommissionFeeRatio \cdot \EntryFee$ is negative, the expected utility of participants is always negative regardless of the mechanism.
\end{proof}

However, as research revealed (e.g.~\cite{Barberis2013-ar, Baker2020-nd}), this would not capture the actual behavior of participants (or gamblers), and their behavioral bias must be incorporated into game mechanism design and analysis. 

\section{Proposed Method}
\label{sec:proposed_method}
We propose a method for an operator to be able to design a lottery game mechanism that takes into account participants' behavioral bias. 
Again, the objectives of an operator and participants are both maximizing their profits. In this regard, an operator needs to determine a profitable mechanism, i.e. $\Prize{j}$ that gives an operator as well as participants more profits, and given a mechanism participants strategically determine whether or not to join a game under not EUT but CPT. Although it would be ideal to mathematically derive the most profitable structure of $\Prize{j}$ ($1 \leq j \leq \nParticipants$) from participants' utility under CPT via an appropriate method such as first- and second-order derivative, it is not easily tractable due to the complex structure of functions in CPT. Hence, in this paper, we propose a simple-yet-beneficial method to achieve the goal of the operator. Our method is composed of the following steps.
\begin{enumerate}
    \item Model entities' behavior.
    \item List mechanisms, i.e. a set of $\{\Prize{1}, \dots, \Prize{j}, \dots, \Prize{\nParticipants}\}$, $\EntryFee$ and $\CommissionFeeRatio$.
    \item Calculate participants' utilities with CPT based on the given mechanisms.
    \item Compare the mechanisms in terms of utility to choose the most profitable mechanism for an operator.
\end{enumerate}
As we have already explained the first step in the previous section, we detail the remaining three steps.

\subsection{Mechanisms}
\begin{figure}
    \centering
    \includegraphics[width=\columnwidth]{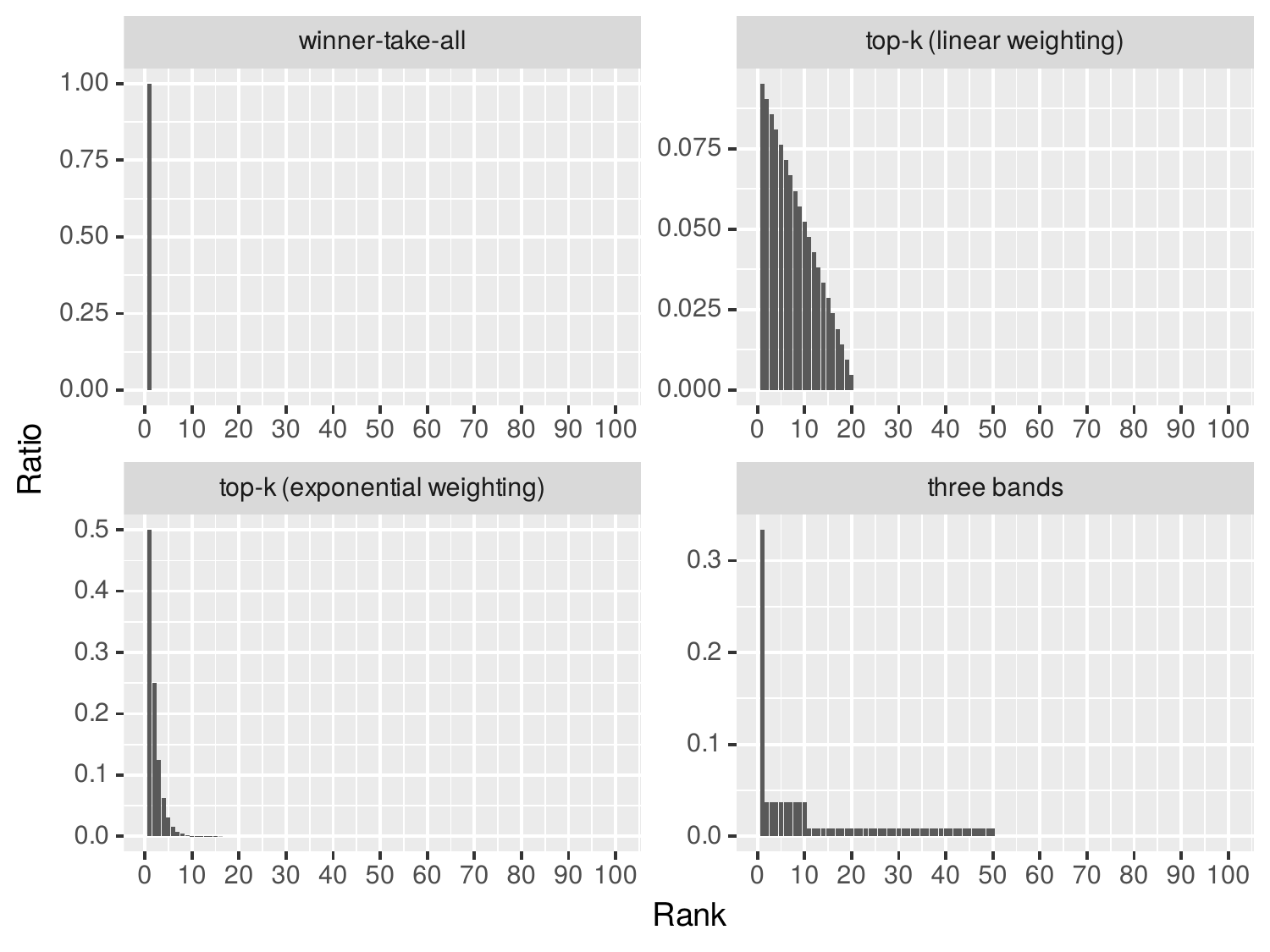}
    \caption{Comparison of prize distribution ($\nParticipants = 100$, and $k = 20$ for the top-$k$ mechanisms).}
    \label{fig:prize_distribution}
\end{figure}
As we cannot implement all the possible mechanisms, an operator needs to pick some promising ones. We compare the following four mechanisms in this paper.

\subsubsection{Winner-Take-All}
The winner-take-all mechanism is to give the top ranker everything, i.e. $\totalPrizes$, while others receive nothing.
\begin{equation}
    \Prize{j} = 
    \begin{cases}
        \totalPrizes, & \text{if } j = 1 \\
        0,            & \text{otherwise.} \\
    \end{cases}
    \label{eq:WTA}
\end{equation}

\subsubsection{Top-$k$ (Linear weighting)}
\label{sec:mechanism:top-k-linear}
The top-$k$ mechanism (linear weighting) is to share $\totalPrizes$ among top-$k$ rankers with linear weighting.
\begin{equation}
    \Prize{j} =
    \begin{cases}
        \totalPrizes \cdot \frac{\nParticipants - j + 1}{\nParticipants \cdot (\nParticipants + 1) / 2} , & \text{if } 1 \leq j \leq \nParticipants \\
        0,            & \text{otherwise.} \\
    \end{cases}
    \label{eq:top-k-linear}
\end{equation}
Note that $\nParticipants = 1$ corresponds to the winner-take-all mechanism.

\subsubsection{Top-$k$ (Exponential weighting)}
\label{sec:mechanism:top-k-exponential}
This mechanism is similar to the top-$k$ (linear weighting), but its weights are exponentially decreasing by ranks, meaning that the top ranker receives 50\% of $\totalPrizes$ and the runner-up receives 25\% of $\totalPrizes$ and so on.
\begin{equation}
    \Prize{j} =
    \begin{cases}
        \frac{\totalPrizes}{2^j}, & \text{if } 1 \leq j \leq \nParticipants \\
        0,            & \text{otherwise.} \\
    \end{cases}
    \label{eq:top-k-exponential}
\end{equation}
Note that although weights should be summed up to 1, the mechanism violates this. However, for sufficiently large $\nParticipants$, $\sum_{j=1}^{\nParticipants} 1/2^j \rightarrow 1$.

\subsubsection{Three bands}
This mechanism is just for comparison purposes. The top ranker receives $1/3$ of $\totalPrizes$, the second to tenth share another $1/3$ of $\totalPrizes$, the 11th to 50th share the remaining $1/3$ of $\totalPrizes$, and the remaining receive nothing. If the number of participants is below 50, then a game is terminated.
\begin{equation}
    \Prize{j} =
    \begin{cases}
        \frac{\totalPrizes}{3}, & \text{if } j = 1 \\
        \frac{\totalPrizes}{3 \cdot 9} = \frac{\totalPrizes}{27}, & \text{if } 2 \leq j \leq 10 \\
        \frac{\totalPrizes}{3 \cdot 40} = \frac{\totalPrizes}{120}, & \text{if } 11 \leq j \leq 50 \\
        0,            & \text{otherwise.} \\
    \end{cases}
    \label{eq:three-bands}
\end{equation}

\figurename~\ref{fig:prize_distribution} shows the proposed mechanisms' prize distribution versus ranks. In this figure, $\nParticipants = 100$ and $k = 20$ (20\% of participants will receive prizes for the top-$k$ mechanisms). As can be seen from this figure, the top ranker will receive all prizes in the winner-take-all game, whereas the top 20 (or 50) out of 100 will share a prize based on their ranks in the top-$k$ and three-band mechanisms.

\subsection{Cumulative Prospect Theory}
\label{sec:CPT}
CPT is one of the powerful theories that incorporates humans' decision-making bias under uncertainty into expected utility calculation. Let us give a simple example that violates the theory of simple expectation maximization. If we have two lotteries, (i) 50\% of a chance of winning \$200 and 50\% chance of losing \$100 and (ii) 100\% chance of winning \$49, and ask players which to play. Many players would choose the latter (i.e. a lottery that a player is sure to win) rather than the former even if the expected gain of the former one is larger (i.e. $-\$100 \cdot 0.5 + \$200 \cdot 0.5 = \$50 > \$49$). 
To capture such violations, Kahneman and Tversky presented a model of utility under risk called CPT, which extends traditional utility formulation to take into account humans' behavior on choices~\cite{Kahneman1979-pc, Tversky1992-vu}. The idea of CPT is to tweak the functions of valuation and probability so that they better explain people's choices under risk. In CPT, players evaluate the following equation rather than Eq.~\eqref{eq:EUT}. 
\begin{equation}
    \sum_{j = 1}^{\nParticipants} w\left(\frac{1}{\nParticipants}\right) v(\Profit{j}),
    \label{eq:CPT}
\end{equation}
where $v(\cdot)$ is a value function and $w(\cdot)$ are decision weights, respectively. In CPT, some characteristics are involved in designing $v(x)$, namely (i) reference dependence, (ii) loss aversion, and (iii) diminishing sensitivity. Reference dependence means that players care about gains and losses that are relative to some point rather than to the absolute amount of their wealth. Loss aversion means that players are more sensitive to losses than to gains, i.e. $|v(x)| < |v(-x)|$ for $x \geq 0$. Diminishing sensitivity means that players tend to be risk-averse in the region of gains but risk-loving in the region of losses. To capture these characteristics, the value function proposed by Kahneman and Tversky~\cite{Kahneman1979-pc, Tversky1992-vu}, and others (e.g~\cite{Prelec1998-fj}) is as follows:
\begin{equation}
    v(x) = \begin{cases}
             x^\alpha,               & \text{if } x \geq 0, \\
             -\lambda (-x)^{\alpha}, & \text{otherwise}.
           \end{cases}
    \label{eq:v_x}
\end{equation}

CPT also captures humans' behavior in they tend to overweight unlikely occurring events (i.e. the tails of the distribution). Specifically, $p_j$ is now transformed, through a function $w(\cdot)$, into $w(p_j)$. Tversky and Kahneman's weighting function, $w_\mathrm{TK}(p)$, is expressed as follows.
\begin{equation}
    w_\mathrm{TK}(p) = \frac{p^{\delta}}{\left(p^{\delta} + (1-p)^{\delta}\right)^{1/\delta}}.
    \label{eq:w_p_TK}
\end{equation}
Another example of probability weighting functions includes Prelec's as expressed as Eq.~\eqref{eq:w_p_Prelec}.
\begin{equation}
    w_\mathrm{Prelec}(p) = - \exp\{-\beta(- \ln p)^{\alpha}\}.
    \label{eq:w_p_Prelec}
\end{equation}

\begin{figure}[t]
    \centering
    \subfloat[Value function $v(x)$] {
        \includegraphics[width=.95\columnwidth]{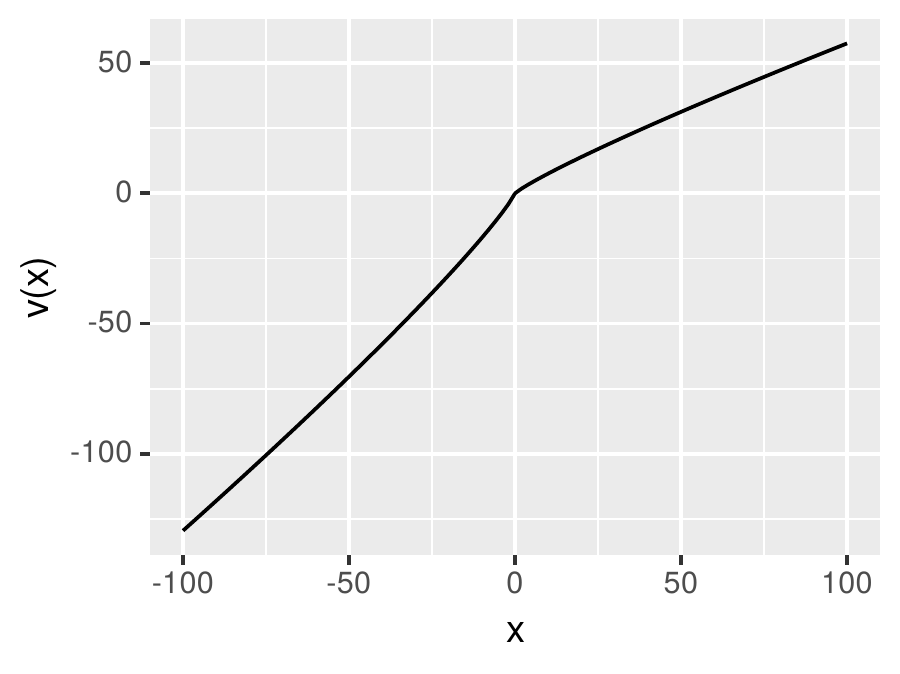}
        \label{fig:v_x}
    }\\
    \subfloat[Probability weighting functions $w(p)$] {
        \includegraphics[width=.95\columnwidth]{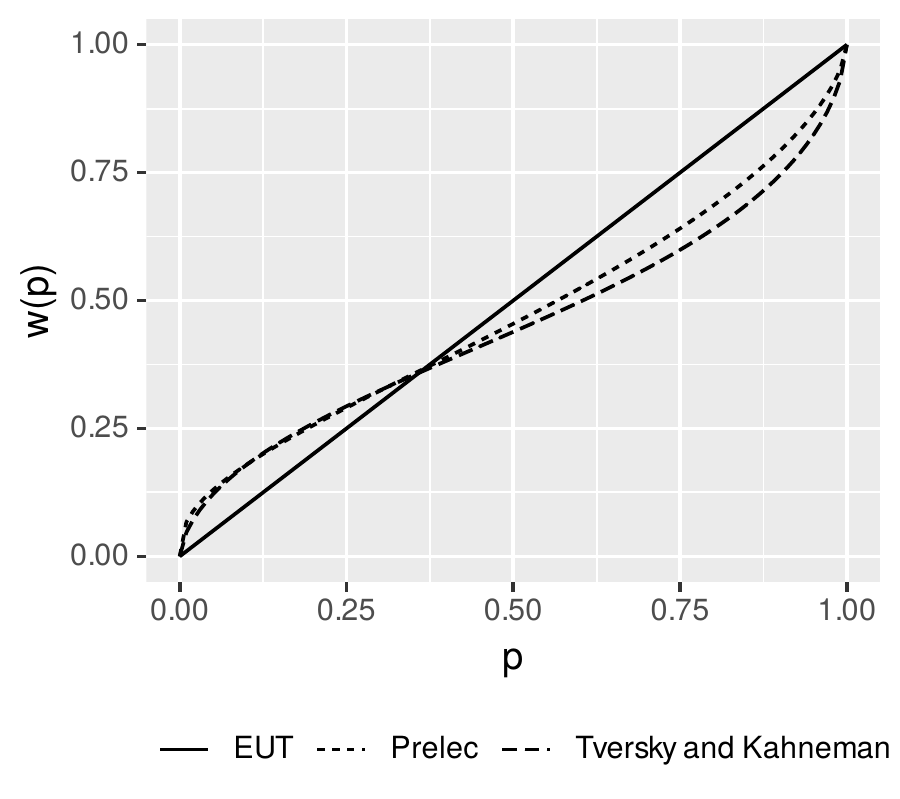}
        \label{fig:w_p}
    }
    \caption{The value function $v(x)$ and probability weighting functions $w(p)$ proposed by Tversky and Kahneman ($\alpha = 0.88, \lambda = 2.25$ for $v(x)$, and $\delta = 0.65$ for $w_\mathrm{TK}(p)$)~\cite{Tversky1992-vu} and by Prelec ($\alpha = 0.65, \beta = 1$ for their $w_\mathrm{Prelec}(p)$)~\cite{Prelec1998-fj}.}
\end{figure}
Figures~\subref*{fig:v_x} and \subref{fig:w_p} show how $v(x)$ and $w(p)$ transform the values of $x$ and $p$, respectively. As can be seen from \figurename~\subref*{fig:v_x}, a utility is concave when $x \geq 0$ and convex in $x < 0$ and $|v(x)| < |v(-x)|$ for $x \geq 0$. Similarly, $w(p)$ captures a human's tendency to overweight the possibilities of unlikely happening events, while no weighting is considered in the EUT.

\subsection{Comparison of Mechanisms}
With Eqs.~\eqref{eq:profit_participants}, \eqref{eq:CPT}, \eqref{eq:v_x} and \eqref{eq:w_p_TK} as well as $\Prize{j}$ of the aforementioned mechanisms (i.e. from Eq.~\eqref{eq:WTA} to Eq.~\eqref{eq:three-bands}) we can compare participant's utilities and choose the most promising mechanism for an operator. In this regard, we need to determine a value function, probability weighting function, and their parameters. 
The parameters of these functions (e.g. Eqs.~\eqref{eq:v_x} and \eqref{eq:w_p_TK}) are often determined via social experiments (e.g.~\cite{Tversky1992-vu, Prelec1998-fj, Gonzalez1999-jr, Bruhin2010-kh}). Subjects are recruited to join experiments, and the parameters are inferred from the experiments' results. Hence, when using the parameters obtained in this way, we need to make sure that the games in their experiments should be similar to those at hand. In this paper, we compare (i) Tversky and Kahneman's functions and (ii) Prelec's functions with the parameters inferred by their past social experiments (e.g.~\cite{Tversky1992-vu, Prelec1998-fj}). The reason for the choice of these two is that they have been well used to explain the behavior of lotteries in the past (e.g.~\cite{Barberis2012-dr}).

\section{Performance Evaluation}
\label{sec:performance_evaluation}
We compare the mechanisms described in the previous section, namely (i) winner-take-all, (ii) top-$k$ (linear weighting), (iii) top-$k$ (exponential weighting), and (iv) three-band mechanisms, in terms of participants' utility under the CPT assumption. 
We also clarify how $\CommissionFeeRatio$ and $\EntryFee$ affect the operator's profit and how profitable each mechanism is.
Regarding $\EntryFee$, we assume cardinal values (e.g. 1, 2) rather than actual currencies for generality. 
A positive utility means that a mechanism is attractive to participants, and they thus should join, and vice versa. Furthermore, the higher utility, the more attractive to participants. 
We determine the optimal $k$s for the two top-$k$ mechanisms (i.e. linear weighting and exponential weighting described in Sections~\ref{sec:mechanism:top-k-linear} and \ref{sec:mechanism:top-k-exponential}) and use them for the overall comparison.
Our codes are available at our GitHub repository.\footnote{ \url{https://github.com/kentaroh-toyoda/Design-of-Profitable-Crypto-Lottery-Mechanisms-with-Prospect-Theory}}

\subsection{Optimal $k$}
\begin{figure}
    \centering
    \subfloat[Effect of $k$ to utility.] {
        \includegraphics[width=\columnwidth]{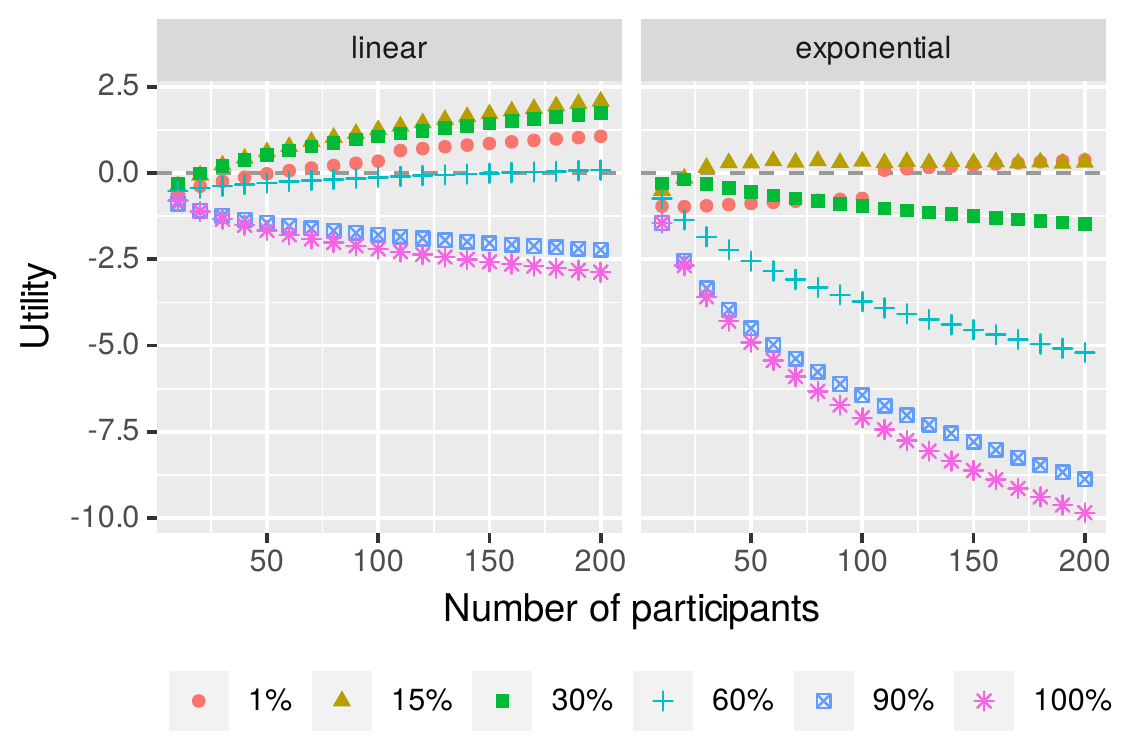}
        \label{fig:utility_k}
    }\\
    \subfloat[Average utility versus $k$. Optimal $k$ is 16\% for linear weighting and 6\% for exponential weighting.] {
        \includegraphics[width=\columnwidth]{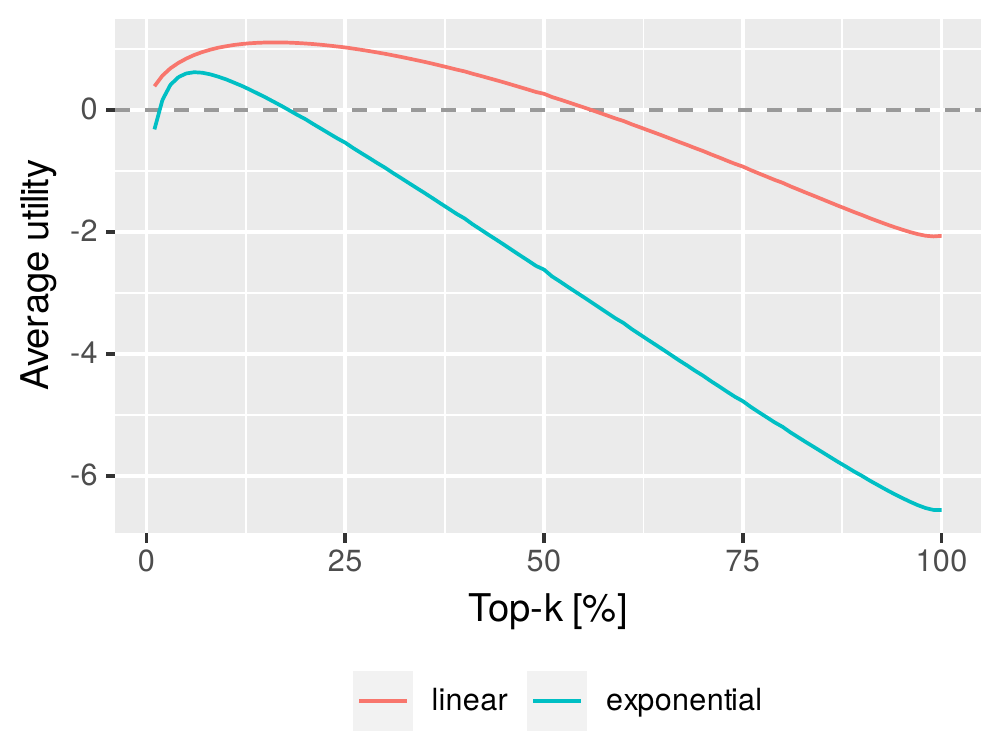}
        \label{fig:average_utility_k}
    }
    \caption{Effect of $k$ in linear and exponential weighting. We used Tversky and Kahneman's value function $v(x)$ with $\alpha = 0.88, \lambda = 2.25$ and probability weighting function $w_\mathrm{TK}(p)$ with $\delta = 0.65$~\cite{Tversky1992-vu}, $\CommissionFeeRatio = 10\%$, and $\EntryFee = 1$.}
\end{figure}
We determine the optimal $k$ for the two top-$k$ mechanisms (i.e. linear weighting and exponential weighting described in Sections~\ref{sec:mechanism:top-k-linear} and \ref{sec:mechanism:top-k-exponential}).
Here, we used Tversky and Kahneman's value function $v(x)$ with $\alpha = 0.88, \lambda = 2.25$ and probability weighting function $w_\mathrm{TK}(p)$ with $\delta = 0.65$~\cite{Tversky1992-vu}, $\CommissionFeeRatio = 10\%$, and $\EntryFee = 1$. $\nParticipants$ and $k$ were varied from 1 to 200 and from 1\% to 100\%, respectively. \figurename~\ref{fig:utility_k} shows the result. As can be seen from this figure, for both weighting methods, when $k$ is high (i.e. when most of the participants are expected to receive prizes but small amounts), it is less appealing to them. Similarly, when $k$ is extremely low (i.e. when only a few participants receive prizes), it is also less appealing. Hence, there should be somewhere that maximizes participants' utilities. We find such an optimal point by averaging utilities over the number of participants. \figurename~\ref{fig:average_utility_k} shows average utilities over $\nParticipants$ versus $k$. We can see from this figure that there are optimal points on both linear weighting ($k = 16\%$) and exponential weighting ($k = 6\%$). We use these $k$s for the following evaluation.

\subsection{Comparison of Mechanisms}
\begin{figure}
    \centering
    \includegraphics[width=\columnwidth]{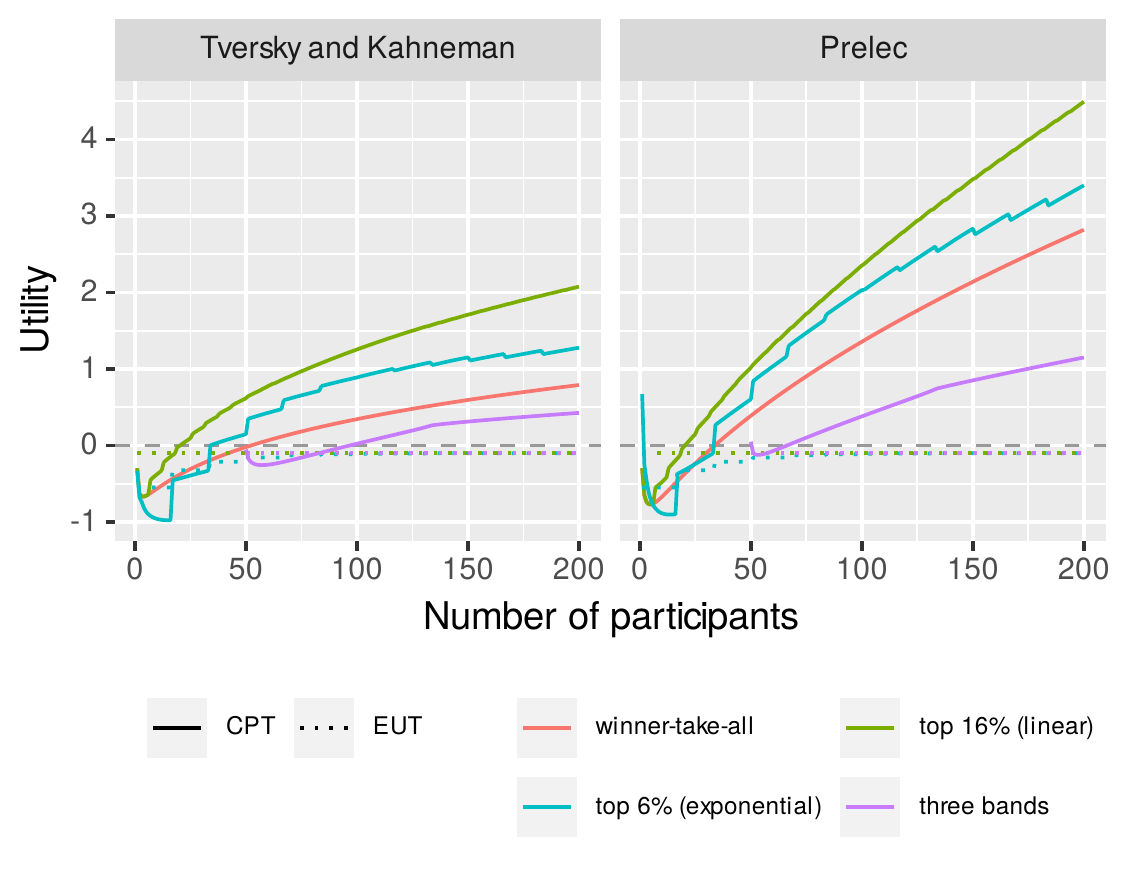}
    \caption{Utility comparison under the different assumptions.}
    \label{fig:utility}
\end{figure}
We clarify the relationships between utility and number of participants, $\nParticipants$, as well as different assumptions (i.e. value and probability weighting functions, CPT versus EUT).\footnote{Although we proved that the expected utility of risk-neutral participants is always negative, we plot it for reference purposes.} As mentioned in the previous section, we compare (i) Tversky and Kahneman's value function $v(x)$ with $\alpha = 0.88, \lambda = 2.25$ and probability weighting function $w_\mathrm{TK}(p)$ with $\delta = 0.65$~\cite{Tversky1992-vu} and (ii) Prelec's (the same $v(x)$ and $w_\mathrm{Prelec}(p)$ with $\alpha = 0.65, \beta = 1$)~\cite{Prelec1998-fj}, where these parameters were inferred by their past social experiments (e.g.~\cite{Tversky1992-vu, Prelec1998-fj}). We also assumed $\CommissionFeeRatio = 10\%$ and $\EntryFee = 1$ and varied $\nParticipants$ from 1 to 200. The optimal parameters of $k$ in the two top-$k$ mechanisms were used, namely $k = 16\%$ and $k = 6\%$ for linear and exponential weighting, respectively.

\figurename~\ref{fig:utility} shows the comparison of utilities by different mechanisms and assumptions. We can clearly see the difference in utilities under the CPT assumption. The top-16\% (linear weighting) is the most appealing mechanism for participants in the sense that (i) the minimum number of participants required to achieve a positive utility (i.e. 21 required for the top-16\% (linear weighting), 34 for the top-6\% (exponential weighting), 53 for the winner-take-all, and 97 for the three bands) and that (ii) it achieves higher utilities than the others. From this result, we can say that the winner-take-all and top-6\% (exponential weighting) are too risky and less appealing to participants and that the top-16\% (linear weighting) mechanism provides a good balance of risk and return. However, we can only say that it is the best among the four mechanisms tested; it is an open question to find the theoretically optimal mechanism. Likewise, although our findings are true against the value and probability weighting functions tested, they may not hold true for other functions. 

\subsection{Effect of $\EntryFee$}
\label{sec:effect_of_fee}
\begin{figure}
    \centering
    \includegraphics[width=\columnwidth]{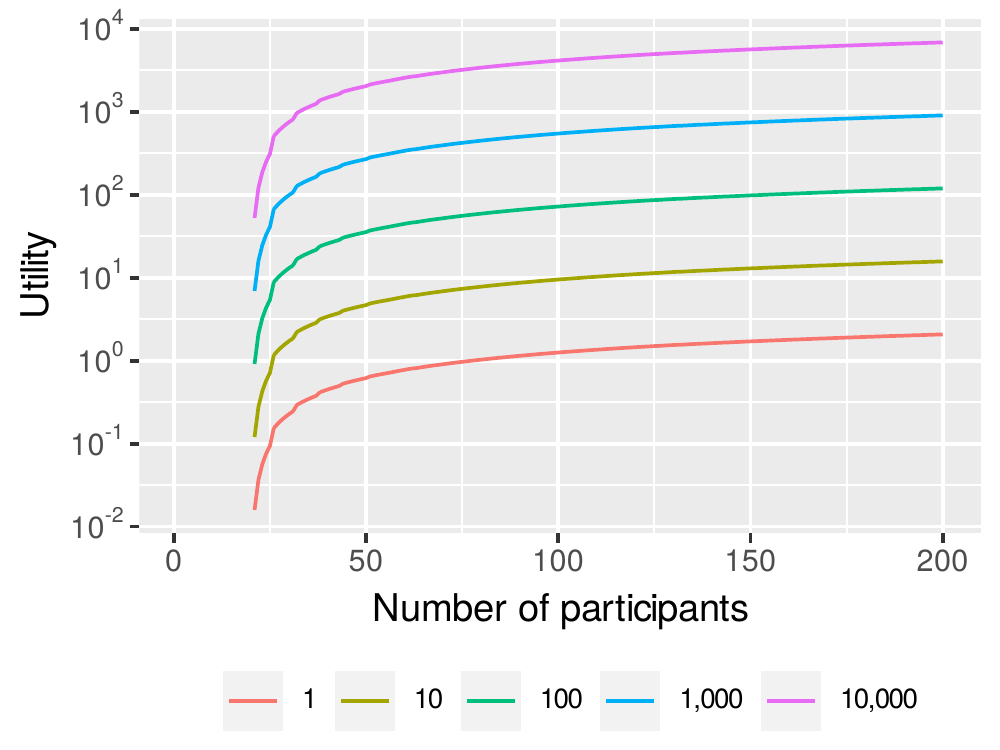}
    \caption{Effect of $\EntryFee$.}
    \label{fig:utility_fee}
\end{figure}
We clarify how $\EntryFee$ affects participants' willingness to join the game. \figurename~\ref{fig:utility_fee} shows the log-scale comparison of utilities when $\EntryFee$ is varied from 1 to 10,000. We used Tversky and Kahneman's functions with the same parameters and the top-16\% (linear weighting) mechanism for this evaluation. Regardless of the values of $\EntryFee$, utilities are positive when $\nParticipants \geq 21$. We can see from \figurename~\ref{fig:utility_fee} that the higher the entry fee, the more appealing to participants. In our game, as accumulated entry fees are distributed to top-$16\%$ after an operator takes $\CommissionFeeRatio$ of it, the more entry fee would be appreciated by participants. However, it is still an open question that this still holds true even if we replace $\EntryFee$ with a real fiat or cryptocurrency. For instance, is $\EntryFee = 1$ BTC, which is around US\$30,000 at the time of writing this paper, more appealing than $\EntryFee = 0.01$ BTC? Although this seems too risky, the proposed approach cannot explain this well yet. 

\subsection{Effect of $\CommissionFeeRatio$}
\label{sec:effect_of_r}
\begin{figure}
    \centering
    \includegraphics[width=\columnwidth]{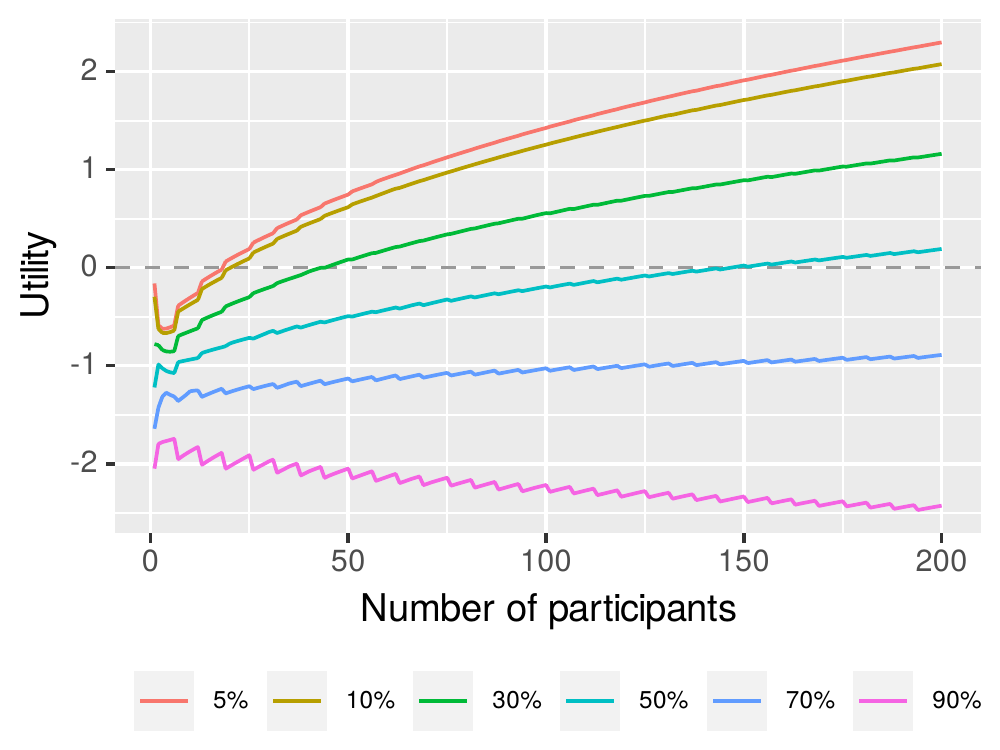}
    \caption{Effect of $\CommissionFeeRatio$.}
    \label{fig:utility_r}
\end{figure}
We then clarify how $\CommissionFeeRatio$ affects participants' utilities. \figurename~\ref{fig:utility_r} shows the utilities when $\CommissionFeeRatio$ is varied from 5\% to 90\%. As can be seen from this figure, the higher $\CommissionFeeRatio$, the less appealing to participants, which is quite understandable as participants will receive less amount of prize when $\CommissionFeeRatio$ is high. For extreme cases (e.g. $\CommissionFeeRatio = $ 90\%), utility keeps decreasing even if $\nParticipants$ increases. Furthermore, the higher $\CommissionFeeRatio$, the more participants are required for the utility to become positive. For instance, only 19 participants are required for $\CommissionFeeRatio = 5\%$, whereas 45 are required for $\CommissionFeeRatio = 30\%$. 

\subsection{Operator's Expected Profit}
\begin{figure}
    \centering
    \includegraphics[width=\columnwidth]{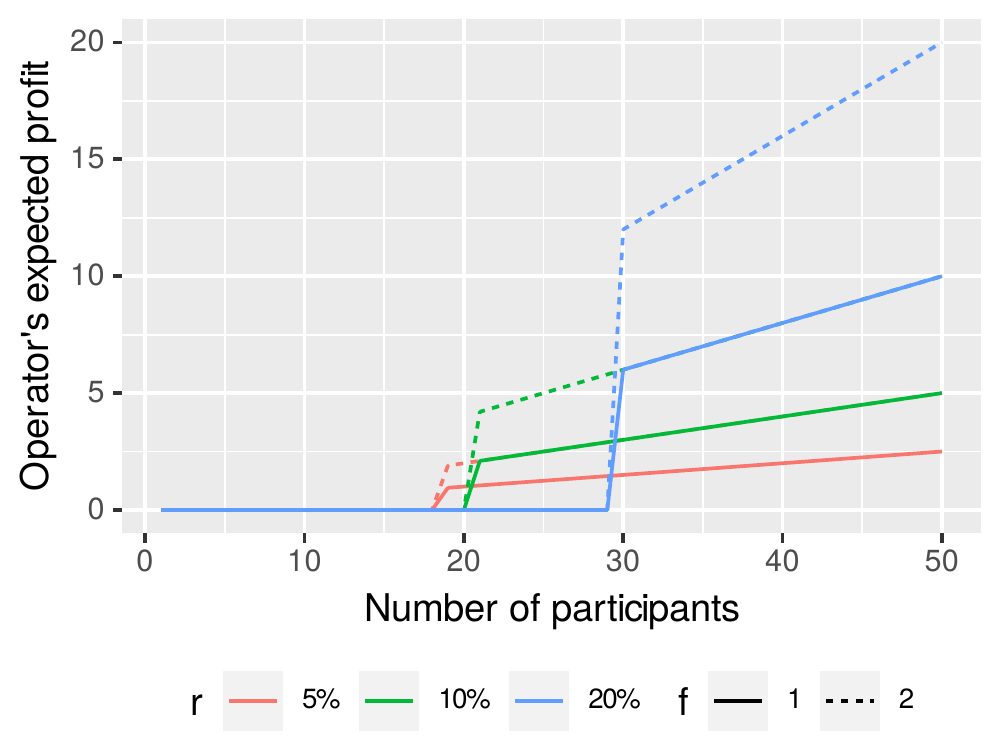}
    \caption{Expected Operator's profit.}
    \label{fig:operators_expected_profit}
\end{figure}
Finally, we compare the operator's expected profits when $\CommissionFeeRatio$ and $\EntryFee$ are varied. We use the same setting as Section~\ref{sec:effect_of_r} except that $\nParticipants$, $\CommissionFeeRatio$ and $\EntryFee$ are varied from 1 to 50, from 5\% to 20\% and from 1 to 2, respectively. \figurename~\ref{fig:operators_expected_profit} shows the operator's expected profits. Note that some results correspond to some regions. For instance, the results of $(\CommissionFeeRatio = 5\%, \EntryFee = 2)$ and $(\CommissionFeeRatio = 10\%, \EntryFee = 1)$ correspond when $\nParticipants \geq 21$. From the operator's perspective, $\CommissionFeeRatio$ should not be too small to increase profitability. However, the higher $\CommissionFeeRatio$, the more participants should join for a game to be appealing. In this regard, one of the operator's strategies is to infer the expected number of participants somehow and to set $\CommissionFeeRatio$ accordingly. For instance, if 30 participants may join a game, then $\CommissionFeeRatio$ should be less than 20\% as this is the maximum $\CommissionFeeRatio$.

\subsection{Discussion}
The contributions of our method are that it provides a reasonable way of designing crypto lottery mechanisms and comparing them in terms of utilities as well as profits. Besides, the method discussed is generic. We believe that it can be applied to other use cases such as the design and analysis of optimal blockchain mining reward and token-based applications such as crowdsourcing and data mining platforms.

On the flip side, we are aware of the following limitations which remain open questions.
\begin{itemize}
    \item Obtained results and conclusions may be biased by a chosen value function, probability weighting function, and their parameters. It is necessary to check if the conditions of experiments where such parameters are derived are similar to the problems at hand.
    \item An operator must manually come up with possible mechanisms, and thus the truly optimal mechanism may be missed.
    \item The model of CPT used in this paper does not explain well how fiat or cryptocurrency $\EntryFee$ affects participants' willingness to join. Is $\EntryFee = 1$ BTC more appealing than $\EntryFee = 0.01$ BTC even if the former is too risky?
    \item This analysis only focuses on a single game. However, it may be necessary to analyze mechanisms and utility when a game is repeated as in~\cite{Barberis2012-dr}.
    \item More factors may need to be considered in modeling participants' behavior. For instance, there is a report that risk attitudes differ by country and depend not only on economic conditions but also on cultural factors~\cite{Rieger2011-zr}.
\end{itemize}

\section{Conclusions}
\label{sec:conclusions}
We have proposed a method of designing the mechanism of lottery games with an example of a crypto-based lottery game.
The key idea is to incorporate behavioral economics into mechanism design to better predict participants' willingness to join a game and the operator's profitability based on utility analysis. In particular, we leveraged CPT to model participants' behavior. We proposed four mechanisms for the game and thoroughly evaluated them in terms of utility and profit by varying parameters.
Our evaluation suggests that the top-$k$ (linear weighting), which distributes prizes to top-$k$ participants of a game and the amount of prizes linearly declines with their ranks, is the best mechanism among the four mechanisms. We have also clarified the relationships between utility and the number of participants under some assumptions (e.g. how the parameters of how much an operator takes an entry fee affect utilities and the minimum number of participants required to make a game appealing).

Our method has some contributions in designing a crypto-based lottery game, however, there is some room for improvement. For instance, ours do not explain well that utilities keep increasing even if a game is too risky to join when an entry fee is too high. We will tackle these issues and provide a better mechanism design method.

\section*{Acknowledgment}
We thank Xuan-The Tran, Minh Sang Nguyen, and Hung Tien Dinh from Crypto Sloth for their valuable feedback.

\bibliographystyle{IEEEtran}
\bibliography{ref}

\end{document}